\newcommand{\spara}[1]{\smallskip\noindent\textbf{#1}}
\newcommand{\tuple}[1]{\ensuremath{\langle #1 \rangle}\xspace}
\newcommand{\hash}[1]{\ensuremath{\mathcal{F}_{#1}}\xspace}
\newcommand{\dspe}{\textsc{dspe}\xspace}
\newcommand{\dspes}{{\dspe}s\xspace}
\newcommand{\dagr}{\textsc{dag}\xspace}
\newcommand{\pkg}{\textsc{Partial Key Grouping}\xspace}
\newcommand{\pkgs}{\textsc{pkg}\xspace}
\newcommand{\kg}{\textsc{kg}\xspace}
\newcommand{\sg}{\textsc{sg}\xspace}
\newcommand{\dc}{\textsc{d-c}\xspace}
\newcommand{\rr}{\textsc{rr}\xspace}
\newcommand{\wc}{\textsc{w-c}\xspace}
\newcommand{\sources}{\ensuremath{\mathcal{S}}\xspace}
\newcommand{\numsources}{\ensuremath{s}\xspace}
\newcommand{\workers}{\ensuremath{\mathcal{W}}\xspace}
\newcommand{\numworkers}{\ensuremath{n}\xspace}
\newcommand{\keyspace}{\ensuremath{\mathcal{K}}\xspace}
\newcommand{\keydistr}{\ensuremath{\mathcal{D}}\xspace}
\newcommand{\keysize}{\ensuremath{\lvert\keyspace\rvert}\xspace}
\newcommand{\nummsgs}{\ensuremath{m}\xspace}
\newcommand{\rank}[1]{\ensuremath{r(#1)}\xspace}
\newcommand{\hh}{\ensuremath{\mathcal{H}}\xspace}
\newcommand{\tail}{\ensuremath{\keyspace \setminus \hh}\xspace}
\newcommand{\naturals}{\mathbb{N}}
\newcommand{\mycomment}[1]{}
\newcommand{\code}[1]{{\textsc #1}}
\DeclareMathOperator*{\expect}{\mathbb{E}}
\DeclareMathOperator*{\avg}{avg}
\let\max=\undefined
\DeclareMathOperator*{\max}{max\vphantom{g}} 
\DeclareMathOperator*{\argmin}{argmin}
\newenvironment{squishlist}
{\begin{list}{$\bullet$}
  { \setlength{\itemsep}{0pt}
     \setlength{\parsep}{3pt}
     \setlength{\topsep}{3pt}
     \setlength{\partopsep}{0pt}
     \setlength{\leftmargin}{1.5em}
     \setlength{\labelwidth}{1em}
     \setlength{\labelsep}{0.5em} } }
{\end{list}}
\newtheorem{theorem}{Theorem}[section]
\newtheorem{proposition}[theorem]{Proposition}
\title{When Two Choices Are not Enough:\\Balancing at Scale in Distributed Stream Processing}
\author{%
{Muhammad Anis Uddin Nasir{\small $^{\sharp}$},
Gianmarco De~Francisci~Morales{\small $^{\diamond}$},
Nicolas Kourtellis{\small $^{\ddagger}$},
Marco Serafini{\small $^{\diamond}$} }
\vspace{1.6mm}\\
\fontsize{10}{10}\selectfont\itshape
$^{\sharp}$KTH Royal Institute of Technology, Stockholm, Sweden\\
$^{\ddagger}$Telefonica Research, Barcelona, Spain\\
$^{\diamond}$Qatar Computing Research Institute, Doha, Qatar\\
\fontsize{9}{9}\selectfont\ttfamily\upshape
anisu@kth.se,
gdfm@acm.org,
nicolas.kourtellis@telefonica.com,
mserafini@qf.org.qa
}
\begin{document}
\maketitle

\begin{abstract}
Carefully balancing load in distributed stream processing systems has a fundamental impact on execution latency and throughput.
Load balancing is challenging because real-world workloads are skewed: some tuples in the stream are associated to keys which are significantly more frequent than others.
Skew is remarkably more problematic in large deployments: having more workers implies fewer keys per worker, so it becomes harder to ``average out" the cost of hot keys with cold keys.

We propose a novel load balancing technique that uses a heavy hitter algorithm to efficiently identify the hottest keys in the stream.
These hot keys are assigned to $d \geq 2$ choices to ensure a balanced load, where $d$ is tuned automatically to minimize the memory and computation cost of operator replication.
The technique works online and does not require the use of routing tables.
Our extensive evaluation shows that our technique can balance real-world workloads on large deployments, and improve throughput and latency by $\mathbf{150\%}$ and $\mathbf{60\%}$ respectively over the previous state-of-the-art when deployed on Apache Storm.
\end{abstract}

\section{Introduction}
\label{sec:intro}

\enlargethispage{\baselineskip}

Stream processing is currently undergoing a revival,\footnote{\url{http://radar.oreilly.com/2015/04/a-real-time-processing-revival.html}}
mostly as a result of recent advances in Distributed Stream Processing Engines (\dspes) such as Storm, S4, and Samza.
The capabilities of modern engines even allow to take a unified approach to streaming and batch computations, such as in Flink~\cite{alexandrov2015stratosphere} and Google Dataflow~\cite{akidau2015dataflow}.
These technologies promise high efficiency combined with high throughput at low latency. 

The versatility achieved by \dspes has enabled developers to build a plethora of streaming applications.
Examples range from continuous query processing to online machine learning, from ETL pipelines to recommender systems~\citep{deFrancisciMorales2013samoa, deFrancisciMorales2015samoa}.
While the data streams that flow through these systems originate from disparate application domains, 
they have one common denominator: many real-world data streams are highly skewed.
Skew is one of the main obstacles to scaling distributed computation, mainly because it creates load imbalance, and the primary focus of this work.

To better appreciate the problem and its context, we first need to understand how \dspe applications are structured.
Streaming applications are commonly represented in the form of directed graphs that represent the \emph{data flow} of the application.
The vertices of the graph are data transformations (operators), and its edges are channels that route data between operators.
The data flowing along these edges is a stream, represented as a sequence of \emph{tuples}, each associated with a \emph{key}.
To achieve high throughput, modern distributed engines leverage data parallelism by creating several instances of an operator (workers) that process independent sub-streams.
These sub-streams are created by partitioning the incoming stream across instances of the downstream operator via \emph{grouping schemes}.

Among the existing grouping schemes, \emph{key grouping} is often used for stateful operators.
This scheme guarantees that all the tuples with the same key are processed by the same operator instance, akin to MapReduce.
Key grouping is commonly implemented via hashing, which is known to cause load imbalance across multiple workers, especially in presence of skew in the input key distribution~\cite{azar1999balanced-allocations, mitzenmacher2001power}.

\begin{figure}[t]
\begin{center}
\includegraphics[width=\columnwidth]{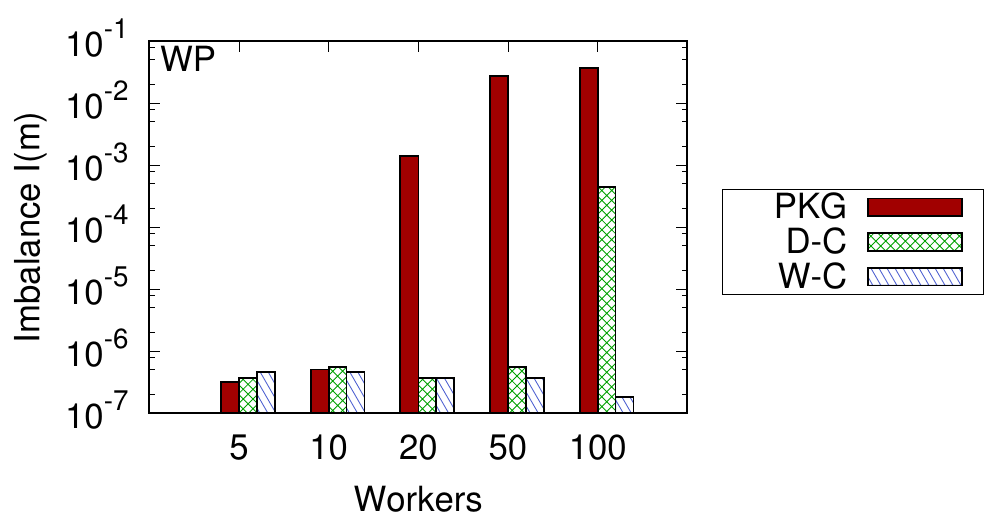}
\caption{Imbalance due to skew is more challenging to handle at large scale. On this dataset from Wikipedia, PKG is able to achieve low imbalance only at small scales, 
while the techniques proposed in this paper, D-Choices (D-C) and W-Choices (W-C), fare better at large scales.
}
\label{fig:pkg-fail}
\end{center}
\vspace{-2em}
\end{figure}
\enlargethispage{\baselineskip}

Some solutions to balance load in \dspes in presence of skew have recently been proposed~\cite{nasir2015power,gedik2014partitioning,rivetti2015efficient,nasir2015pkg}.
In particular, randomized techniques based on the ``power of two choices"~\cite{azar1999balanced-allocations}, such as \pkg~\cite{nasir2015power}, have found their way into mainstream \dspes such as Apache Storm, thanks to their combined simplicity and effectiveness.
Unlike key grouping, \pkgs associates each key to two possible operator instances, and selects the least loaded of the two whenever a tuple for a given key must be processed.
Techniques leveraging the power of two choices can adapt dynamically to workloads, they do not require the \dspe to support operator migration, and route keys to workers using hash functions rather than routing tables, which can become very large and must be replicated consistently across multiple upstream operators.

In this paper, we tackle a serious scalability problem of these techniques: using only two choices is not sufficient in very large deployments.
The source of the problem is that these techniques require the amount of skew in the distribution of keys to be inversely proportional to the size of the deployment.
For example, \pkgs requires the fraction of load generated by the most frequent key to be less than the combined ideal (i.e., average) load supported by two workers.
As the size of the deployment grows, the average fraction of load per worker decreases, so the fraction of load generated by the most frequent key must also decrease for the approach to be effective.
Intuitively, the imbalance exists because at larger scales it becomes increasingly difficult to average out the load among multiple keys.
Figure~\ref{fig:pkg-fail} shows this problem occurring while processing a real-world dataset from Wikipedia.
While at small scales (5 or 10) \pkgs manages to keep the system balanced, at larger scales (20, 50 or 100 workers) the imbalance grows significantly and gets close to 10\% (Section~\ref{sec:preliminaries} formally defines the notion of imbalance).

The scalability problem is exacerbated for workloads that are extremely skewed, as for example in graph streaming applications where some vertices are much more popular than others.\footnote{See \url{http://konect.uni-koblenz.de/statistics/power} for a repository of common graph datasets and their power law exponent.} 
Consider, for example, that under a Zipf distribution (which models the occurrence of words in natural language) with exponent $z = 2.0$, the most frequent key represents nearly $60\%$ of the occurrences, and thus \pkgs is unable to guarantee ideal load balance for any deployment larger than $3$ workers, as shown in the original analysis~\cite{nasir2015power}.


The straightforward alternative to using two choices is \emph{shuffle grouping}, a grouping scheme which distributes the messages in a round-robin fashion irrespective of the keys.
Shuffle grouping is the best choice for stateless operators.
However, it comes with a high cost when used for stateful operators: it is necessary to replicate the state associated with keys potentially on each worker, since each worker could have to process each key.
The memory overhead of shuffle grouping can thus become directly proportional to the number of workers used, therefore hindering scalability.

We propose two new streaming algorithms for load balancing, called \emph{D-Choices} and \emph{W-Choices}, that enable \dspes\ to scale to large deployments while supporting stateful operators.
As shown in Figure~\ref{fig:pkg-fail}, \emph{D-Choices} and \emph{W-Choices} achieve very low imbalance (smaller than $0.1\%$) even on very large clusters.
Our algorithms reduce the memory footprint by estimating the minimum number of choices per key for cases where two are not enough.
The intuition behind our techniques is that, for highly skewed inputs, most of the load on the system is generated by a handful of ``hot'' keys.
Thus, while the long tail of low-frequency keys can be easily managed with two choices, the few elements in the head needs additional choices.
Using more than two choices requires answering two main questions: ($i$) \emph{``which are the keys that need more than two choices?''} and ($ii$) \emph{``how many choices are needed for these keys?''}.

Our proposed method identifies and partitions the most frequent keys in the distribution, which are the main reason for load imbalance at scale.
It uses a streaming algorithm to identify ``heavy hitter'' keys that require a larger number of workers,
and allows them a number of choices $d \geq 2$ which is minimal yet sufficient for load balancing.
For the rest of the keys it uses the standard two choices.
To find the heavy hitters, we leverage the well-known SpaceSaving algorithm
~\cite{metwally2005spacesaving}, and its recent generalizations to the distributed setting~\cite{berinde2010space}.

The threshold used to find the heavy hitters is an important parameter in determining the memory cost of the approach, since these keys are mapped to a larger number of workers.
Our evaluation examines a range of potential thresholds and finds that a single threshold is sufficient for most settings.
This threshold requires only a very small number of keys to be assigned to more than two workers.
In fact, if the stream is very skewed the hot keys are only a handful, otherwise if the stream is not very skewed, two choices are sufficient for most keys, even for some of the most frequent ones.

Our two algorithms differ in the number of choices provided for the hot keys.
W-Choices allows as many choices as the number of workers.
Since the heavy hitter algorithm selects only few hot keys, this solution performs well if the state associated with each key is not extremely large.
D-Choices is a more sophisticated strategy that tries to estimate the minimum number of choices $d$ for hot keys that is sufficient to ensure a balanced load.
The estimation considers several factors, such as the relative frequency of the hot keys, and the overlaps among the sets of choices of different keys. 
In addition, since the mapping of keys to workers responsible for processing them is typically done using hash functions, we need to take into account their collision probability. 
Our analysis provides a lower bound on the expected value of $d$ which is very accurate.
We verify our bound via an extensive evaluation that compares the value of $d$ obtained from the analysis with the optimal value of $d$ found empirically in several simulated settings.

We evaluate our methods on real workloads and show that they allow scaling to large deployments, whereas existing approaches using only two choices do not scale to 50 or more workers.
This increased scalability translates into an improvement of up to $150\%$ in throughput, and up to $60\%$ in latency, over \pkgs when deployed on a real Apache Storm cluster.
The improvements are even larger over key grouping: $230\%$ in throughput and $75\%$ in latency.

In summary, this paper makes the following contributions:

\begin{squishlist}
\item We propose two new algorithms for load balancing of distributed stream processing systems; the algorithms are tailored for large-scale deployments and suited to handle the skew in real-world datasets;
\item We provide an accurate analysis of the parameters that determine the cost and effectiveness of the algorithms;
\item An extensive empirical evaluation\footnote{Code available at \url{https://github.com/anisnasir/SLBSimulator} and \url{https://github.com/anisnasir/SLBStorm}} shows that our proposal is able to balance the load at large scale and in the presence of extreme skew;
\item The reduced imbalance translates into significant improvements in throughput and latency over the state-of-the-art when deployed on Apache Storm.
\end{squishlist}



\section{Preliminaries}
\label{sec:preliminaries}

\subsection{The Dataflow Model}
Most modern \dspes run on clusters of machines that communicate by exchanging messages.
These \dspes run  streaming applications that are usually represented in the form of a directed acyclic graph (\dagr), which consists of a set of vertices and a set of edges.
The vertices in the \dagr (called \emph{operators}) are a set of data transformations that are applied on an incoming data stream, whereas the edges are data channels between vertices which connect the output of one operator to the input of the following one.
In order to achieve high performances, \dspes run multiple \emph{instances} of these operators and let each instance process a portion of the input stream, i.e., a \emph{sub-stream} (Figure~\ref{fig:skew-imbalance}).

For simplicity, we specify our notation by considering a single stream between a pair of operators, i.e., a single edge connecting two vertices.
Given a stream under consideration, let the set of upstream operator instances be called \emph{sources} and be denoted as \sources, and the set of downstream operator instances be called \emph{workers} and denoted as \workers, and their sizes be $\lvert \sources \rvert = \numsources$ and $\lvert \workers \rvert = \numworkers $.
Although the number of messages is unbounded, at any given time we denote the number of messages in the input stream as $m$.


The input stream is fed into a \dspe as a sequence of messages $\tuple{t,k,v}$, where $t$ is a timestamp, $k$ is a key, and $v$ is a value.
The keys in the messages are drawn from a (skewed) distribution \keydistr over a finite key space \keyspace, i.e., some of the keys appear more frequently than others.
Let $p_{k}$ be the probability of drawing key $k$ from \keydistr.
We define the \emph{rank} of a key \rank{k} as the number of keys that have a higher or equal probability of appearing according to \keydistr (with ties broken arbitrarily).
We identify the key by its rank in \keydistr, and therefore, by definition 
the probability of keys being drawn decreases as follows: $p_1 \geq p_2 \geq \ldots \geq p_{\keysize}$, with $\sum_{k \in \keyspace} p_k = 1$.

\subsection{Stream Partitioning}
Sub-streams are created by partitioning the input stream via a \emph{stream partitioning} function $P_{t}: \keyspace \to \naturals$, which maps each key in the key space to a natural number, at a given time $t$.
This number identifies the worker responsible for processing the message (we identify the set \workers of workers with a prefix of the naturals).
Each worker is associated to one or more keys, and keys associated to the same worker are said to be \emph{grouped} together by a \emph{grouping scheme}.
Henceforth, we consider three already existing grouping schemes:
\begin{squishlist}
\item \textbf{\textit{Key Grouping (\kg)}}:
This scheme ensures that messages with the same key are handled by the same downstream worker.
\kg is usually implemented via hashing.
\item \textbf{\textit{Shuffle Grouping (\sg)}}:
This scheme evenly distributes messages across the available downstream workers, thus ensuring ideal load balance.
\sg is usually implemented via round-robin selection of workers.
\item \textbf{\textit{Partial Key Grouping (\pkgs)}}~\cite{nasir2015power}:
The scheme ensures that messages with the same key are processed by at most two workers. 
It is implemented by using two separate hash functions that produce two candidate workers.
The message is then routed to the least loaded of the two workers.
\end{squishlist}

\spara{Load imbalance.}
We use a definition of \emph{load} similar to others in the literature (e.g., Flux~\citep{shah2003flux} and~\pkgs~\cite{nasir2015power}).
At time $t$, the load of a worker $w$ is the fraction of messages handled by the worker up to $t$:
\[
L_w(t) = \frac { \left\vert \left\{ \tuple{\tau,k,v} \mid P_{\tau}(k) = w \wedge \tau \leq t \right\} \right\vert }{ \nummsgs } .
\]

We define the \emph{imbalance} at time $t$ as the difference between the maximum and the average load of the workers:
\[
I(t) = \max_{w}{L_w(t)} - \avg_{w}{L_w(t)}, \quad w \in \workers .
\]
Our goal is to find a stream partitioning function $P_t$ that minimizes the maximum load $L(\nummsgs)$ of the workers, which is equivalent to minimizing the imbalance $I(\nummsgs)$ since the average load $\nicefrac{1}{\numworkers}$ is independent of $P_t$.

\begin{figure}[t]
\begin{center}
\includegraphics[width=\columnwidth]{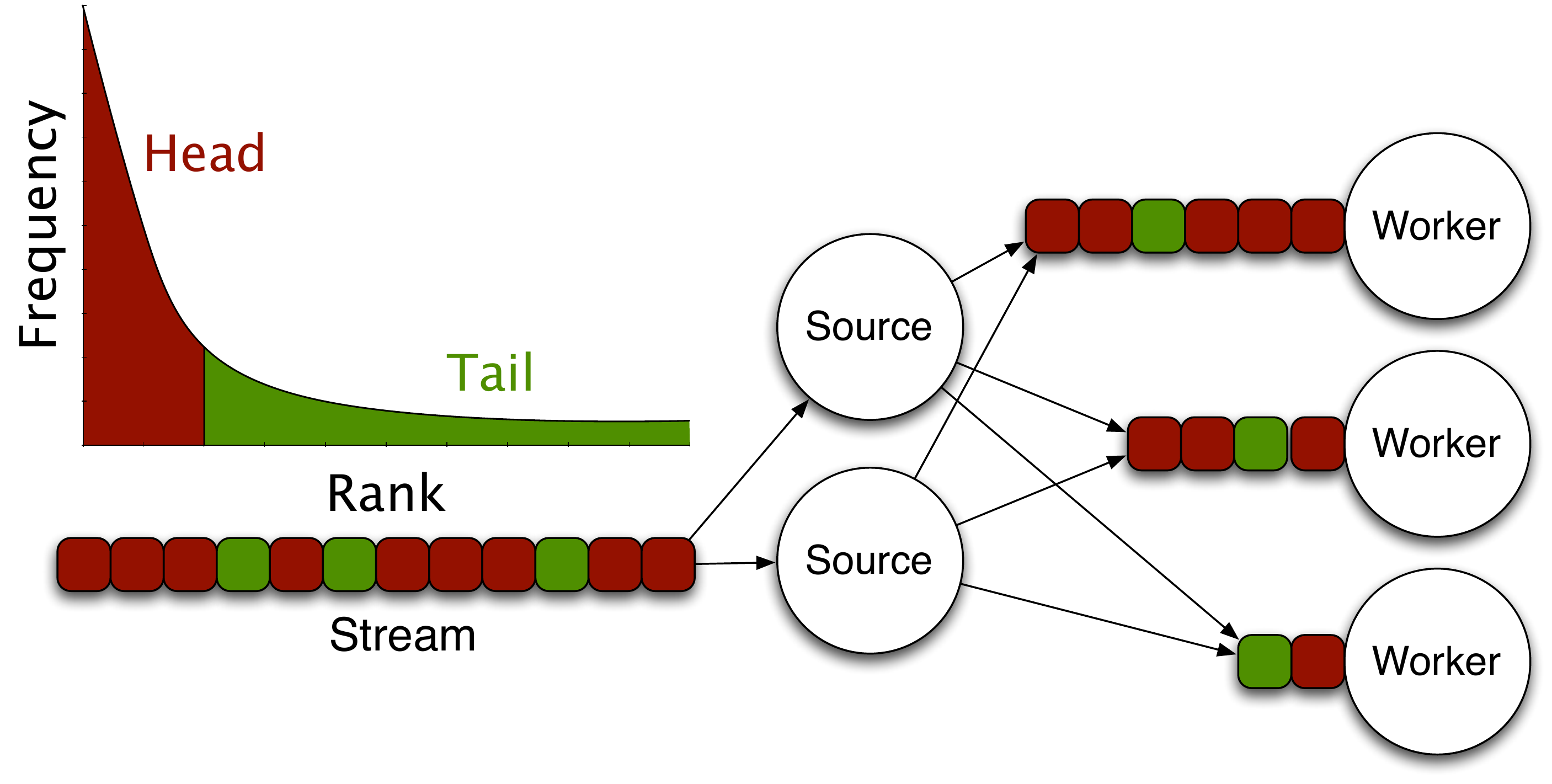}
\caption{Imbalance generated by high skew. Most of the load is due to the head of the distribution.}
\label{fig:skew-imbalance}
\end{center}
\vspace{-2em}
\end{figure}

\section{Solution Overview}
\label{sec:solution-overview}

The solution we explore in this paper revolves around the idea of detecting at runtime the hot keys or ``heavy hitters'', which comprise the \emph{head} of the key distribution, and treat them differently from the \emph{tail} when it comes to mapping them on workers for processing.
In particular, we track the head \hh of the key distribution in a distributed fashion across sources, and then split the keys belonging to the head across $d \geq 2$ workers.

To instantiate this solution, we need to define the size of the head (i.e., the cardinality of \hh) and the number of choices per key that  should be used (i.e., the value of $d$).
These two quantities jointly affect the load on the workers.
Given a skewed key distribution, there will be a combination of number of keys considered $|\hh|$ and number of choices $d$, that minimize the overall load imbalance across workers, while adding the smallest possible overhead.
However, such an ideal setup is difficult to reach at runtime, in a real system that analyzes streams of data with an unknown key distribution.

In order to tackle this multi-objective problem, we first define a threshold $\theta$ that separates the head from the tail of the distribution.
Then, we study the problem of finding $d$, the number of choices for the head that produces a low load imbalance.
Next, we discuss these two issues. 

\subsection{Finding the Head}
One important issue in our solution is how to define the threshold $\theta$ that splits the head of the distribution in the input stream from the tail.
%
%
%
%
%
%
Formally, we define the head \hh as the set of keys that are above a frequency threshold:
\[
\hh = \left\{ k \in \keyspace \mid p_{k} \geq \theta \right\} .
\]
As discussed in the introduction, algorithms based on the power of two choices make assumptions on the maximum skew in the key distribution. 
For instance, \pkgs assumes an upper bound on the frequency of the most frequent key.
This bound is inversely proportional to the total number of choices in the system, i.e., to the total number of workers in our setting.
We want to select the threshold $\theta$ such that the head includes all the keys violating the assumptions of \pkgs.
Therefore, we focus on the space where \pkgs falls short, according to its analysis.

The first bound provided by the analysis states that if $p_1 > \nicefrac{2}{\numworkers}$, then the expected imbalance at time $m$ is lower-bounded by
$(\frac{p_1}{2} - \frac{1}{\numworkers}) m$, which grows linearly with $m$~\citep{nasir2015power}.
This imbalance exists because the load generated by the most frequent key exceeds the capacity of two workers.
Therefore, clearly, all the keys that exceed the capacity of two workers belong to the head, and thus, $\theta \leq \nicefrac{2}{\numworkers}$.
The second bound in the analysis states that if $p_1 \leq \nicefrac{1}{5\numworkers}$, then the imbalance generated by \pkgs is bounded, with probability at least $1-\nicefrac{1}{\numworkers}$.
Therefore, we do not need to track keys whose frequency is smaller than this bound, and so $\theta \geq \nicefrac{1}{5\numworkers}$.

Given these theoretical bounds, we evaluate thresholds in the range $ \nicefrac{1}{5 \numworkers} \leq \theta \leq \nicefrac{2}{\numworkers}$.
As shown in Section~\ref{sec:evaluation}, most values in this range provide a satisfactory level of imbalance, so we pick a conservative value of $\nicefrac{1}{5n}$ as the default.
Albeit conservative, this value still results in a small cardinality for the head.
Even if keys follow a uniform distribution where all keys have probability $\nicefrac{1}{5n}$, the head will be comprised only by $5n$ keys.
In more skewed distributions, the size is substantially smaller.
Figure~\ref{fig:head-size} shows the cardinality of \hh for several Zipf distributions, for the two extremes of the range of $\theta$.
\begin{figure}[t]
\begin{center}
\includegraphics[width=0.8\columnwidth, height=4.5cm]{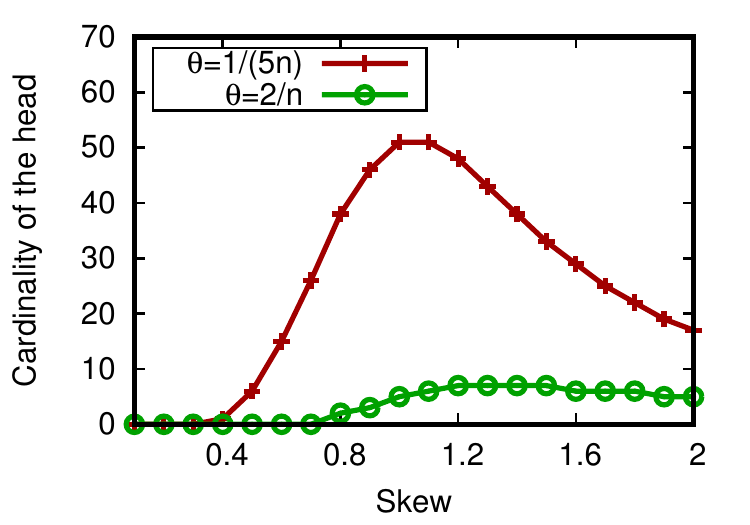}
\caption{Number of keys in the head of the distribution for $50$ and $100$ workers for a Zipf distribution with $z$ = $0.1 \ldots 2.0$, \keysize = $10^4$ and \nummsgs = $10^7$.}
\label{fig:head-size}
\end{center}
\vspace{-2em}
\end{figure}

\enlargethispage{0.5\baselineskip}
These numbers show that we we need to pay the price of added replication only for a small number of keys.
In order to make our proposal work completely online, we use a streaming algorithm to track the head and its frequency at runtime. 
In particular, we use the SpaceSaving algorithm~\cite{metwally2005spacesaving}, a counter-based algorithm for heavy hitters which can be extended to work in a distributed setting~\cite{berinde2010space}.

%

\subsection{Number of Choices for each Head Key}
Recall that we want to manage the head of the distribution differently from the tail, in order to achieve better load balance across workers.
In our proposed solution, both the head and tail make use of a \code{Greedy-$d$} process, but apply different $d$ per key.
Next we outline the basic principle behind this process.

\spara{Greedy-$d$.}
Let  $\hash{1}, \ldots, \hash{d}$ be $d$ independent hash functions that map $\keyspace \to [\numworkers]$ uniformly at random.
We define the \code{Greedy-$d$} process as follows: at time $t$, the $t$-th message (whose key is $k_t$) is placed on the worker with minimum current load among the candidate ones $\hash{1}(k_t), \ldots, \hash{d}(k_t)$, i.e., $P_t(k_t) = \argmin_{i \in \{\hash{1}(k_t),\ldots,\hash{d}(k_t)\}} L_i(t)$, where ties are broken arbitrarily.

Our method leverages this process to map keys to workers by hashing a key with a set of $d$ hash functions, and assigning the key to the least loaded of the $d$ available options.
In our solution, the tail has $d = 2$ choices as in \pkgs. 
However, the head can have $d \geq 2$, depending on the frequency of the keys.

We propose two different algorithms, depending on the value of $d$. 
These algorithms differ in their memory and aggregation overhead.

\begin{squishlist}
\item \textbf{\textit{D-Choices:}}
This algorithm adapts to the particular distribution of frequencies in the head, and uses a Greedy-$d$ process to handle the head, with $d < \numworkers$.
We show analytically how to pick $d$ in the following section.
\item \textbf{\textit{W-Choices:}}
This algorithm allows $\numworkers=|\workers|$ workers for the head. 
Conceptually, it is equivalent to setting $d \gg \numworkers \ln \numworkers$ for a Greedy-$d$ scheme, since this condition guarantees that there is at least one item in each bucket when using an ideal hash function.
In practice, there is no need to hash the keys in the head, and the algorithm can simply pick the least loaded among all workers.
\end{squishlist}

The pseudocode of the algorithm used by senders to determine the destination worker for their output messages is reported in Algorithm~\ref{alg:find-worker}.
\textsc{UpdateSpaceSaving} invokes the heavy hitters algorithm of~\cite{metwally2005spacesaving} and returns the set of the current heavy hitters. 
\textsc{FindOptimalChoices} determines the number of choices for heavy hitter keys as explained in Section~\ref{sec:analysis-d}. 
\textsc{MinLoad} selects the id of the worker with lowest load among the ones passed as arguments.
The load is determined based only on local information available at the sender, as in~\cite{nasir2015power}.
The algorithm uses $d$ different hash functions $\hash{i}$.

We also test our algorithms against a simple but competitive baseline, which has the same overhead as W-Choices:
\begin{squishlist}
\item \textbf{\textit{Round-Robin:}}
This algorithm performs round-robin assignment for the head across all available $\numworkers$ workers.
In comparison to W-Choices, this algorithm assigns the head to workers in a load-oblivious manner.
\end{squishlist}


\begin{algorithm}
\small
\caption{Stream partitioning algorithm.}
\pdfoutput=1

\DontPrintSemicolon
\SetKwProg{myfunct}{function}{}{}
\SetKw{Event}{upon}
\SetKwBlock{Do}{}{}

\Event message $m = \tuple{k, v}$ \Do{
	$\hh \leftarrow$ \textsc{UpdateSpaceSaving}$(k)$\;
	$d \leftarrow 2$ \tcp{Default as in \pkgs}
	\If{$k \in \hh$}{
		\If{\em \texttt{D-CHOICES}}{
			$d \leftarrow $ \textsc{FindOptimalChoices}()\;
		}
		\ElseIf{\em \texttt{W-CHOICES}}{
			$d \leftarrow \numworkers$\;
		}
	}

	$w \leftarrow$ \textsc{MinLoad}($\hash{1}(k),\ldots,\hash{d}(k)$)\;
	send$(w,m)$\;
}

\label{alg:find-worker}
\end{algorithm}

\section{Analysis}
\label{sec:analysis}
In this section, we analyze the D-Choices algorithm in order to find the number of choices $d$ that enables achieving load balance.
In addition, we discuss the memory requirement and overhead for the different algorithms.

\subsection{Setting $d$ for D-Choices}
\label{sec:analysis-d}
We first determine how to express $d$ analytically and then discuss how the function \textsc{FindOptimalChoices} computes a solution for that expression.

For this analysis, we assume that the key distribution \keydistr and the threshold $\theta$ are given.
Recall that we aim at minimizing $d$ in order to reduce the memory and aggregation overheads created by using a large number of choices, which are roughly proportional to $d \times \lvert \hh \rvert$.
Therefore, we would like to know which is the smallest number of choices $d$ that can lead to imbalance below some threshold $\epsilon$.

More formally, our problem can be formulated as the following minimization problem:
\begin{align}
\label{def:imbalance-dchoices}
\operatorname*{minimize}_d & \quad f(d ; \keydistr, \theta) = d \times \lvert \hh_{\keydistr, \theta} \rvert \,, \\
\label{def:imbalance-dchoices-constraint}
\operatorname*{subject\ to} & \quad \expect_d \left[ I(m) \right] \leq \epsilon \,.
\end{align}
In the problem formulation, we emphasize that \hh is a function of \keydistr and $\theta$.
Because both $\keydistr$ and $\theta$ are given, the objective function of the problem reduces to simply minimizing $d$.
Additionally, we employ a constraint on the expected imbalance generated when using the Greedy-$d$ process for \hh.

In order to attack the problem, we would like to express the imbalance $I(m)$ as a function of $d$.
However, the load on each worker $L_i$, from which the imbalance is computed, is a random variable that depends on the key assignment and the specific distribution, both of which are random.
The Greedy-$d$ process adapts dynamically to the load conditions, so it is hard to express the load analytically as a function of $d$.
Instead, we perform a \emph{lower bound} analysis on the expected load, and find analytically what are the \emph{necessary} conditions to obtain a feasible solution.
We then show empirically in Section~\ref{sec:evaluation} that the optimal values for $d$ are very close to this lower bound.

\begin{proposition}
\label{thm:main}
If the constraint of the optimization problem in Eqn.~\eqref{def:imbalance-dchoices-constraint} holds, then:
\begin{align}
\label{eq:final-prefix-bound}
\sum_{i \leq h} p_i + 
\left( \frac{ b_h }{\numworkers} \right)^{d} \sum_{h < i \leq \lvert \hh \rvert} p_i + 
\left( \frac{ b_h }{\numworkers} \right)^{2} \sum_{i > \lvert \hh \rvert} p_i 
\leq b_h \left( \frac{1}{\numworkers} + \epsilon \right), \\
\text{where } b_h = 
\numworkers - \numworkers \left( \frac{\numworkers - 1}{\numworkers} \right)^{h \times d}, \quad \forall k_h \in \hh. \notag
\end{align}
\end{proposition}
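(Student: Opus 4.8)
The plan is to fix an index $h$ with $k_h\in\hh$ and to study the random set $B$ of workers that act as hash-candidates for at least one of the $h$ most frequent keys $k_1,\dots,k_h$. Since $p_1\ge p_2\ge\cdots$, the head $\hh$ is a prefix of the rank order, so $k_1,\dots,k_h$ all lie in $\hh$ and hence each receives $d$ candidates under D-Choices; thus $B=\bigcup_{i\le h}\{\hash{1}(k_i),\dots,\hash{d}(k_i)\}$. The whole argument is to sandwich the expected aggregate load on $B$, namely $\expect\bigl[\sum_{w\in B}L_w(m)\bigr]$, between the left-hand side of~\eqref{eq:final-prefix-bound} and $b_h(\tfrac1{\numworkers}+\epsilon)$. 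As a preliminary, $\expect[|B|]=b_h$: the $h\cdot d$ hash values $\hash{\ell}(k_i)$ are independent and uniform on $[\numworkers]$, a fixed worker is missed by all of them with probability $((\numworkers-1)/\numworkers)^{hd}$, and linearity of expectation gives $\expect[|B|]=\numworkers-\numworkers((\numworkers-1)/\numworkers)^{hd}=b_h$.

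For the lower bound: any message whose key is one of $k_1,\dots,k_h$ is routed, deterministically given the hash functions, to one of that key's candidates and hence into $B$, contributing an expected $m\sum_{i\le h}p_i$ messages (message keys are drawn from $\keydistr$ independently of the hash functions, which is what lets these counts factor through the $p_i$'s). A message whose key is $k_j$ with $h<j\le\lvert\hh\rvert$ lands in $B$ whenever all $d$ of $k_j$'s candidates fall in $B$ --- a sufficient condition; since $k_j\notin\{k_1,\dots,k_h\}$ its candidates are independent of $B$ and uniform, so this probability equals $\expect\bigl[(|B|/\numworkers)^d\bigr]$, which is at least $(b_h/\numworkers)^d$ because $x\mapsto x^d$ is convex. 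The identical argument with two candidates covers the tail keys $k_j$, $j>\lvert\hh\rvert$. Summing the expected message counts and dividing by $m$ shows that $\expect\bigl[\sum_{w\in B}L_w(m)\bigr]$ is at least the left-hand side of~\eqref{eq:final-prefix-bound}.

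For the upper bound, every worker's load is at most the maximum load, $L_w(m)\le\max_{w'}L_{w'}(m)=\tfrac1{\numworkers}+I(m)$ (the average load is $\tfrac1{\numworkers}$ since the loads sum to $1$), so $\sum_{w\in B}L_w(m)\le|B|\bigl(\tfrac1{\numworkers}+I(m)\bigr)$. Taking expectations and using $\expect[|B|]=b_h$ together with the hypothesis $\expect[I(m)]\le\epsilon$ gives $\expect\bigl[\sum_{w\in B}L_w(m)\bigr]\le b_h(\tfrac1{\numworkers}+\epsilon)$. Chaining the two bounds yields~\eqref{eq:final-prefix-bound}, and letting $h$ range over all indices with $k_h\in\hh$ produces the stated family of inequalities.

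I expect the delicate step to be the very last substitution: $|B|$ and $I(m)$ need not be independent, so $\expect\bigl[|B|(\tfrac1{\numworkers}+I(m))\bigr]$ is not literally bounded by $b_h(\tfrac1{\numworkers}+\epsilon)$ from the two marginal facts alone. The way to make this rigorous is to note that $|B|$ is a function of $hd$ i.i.d.\ uniform samples and is therefore sharply concentrated around $b_h$ (e.g.\ by bounded differences or negative association), so replacing $|B|$ by $b_h$ only perturbs the bound by lower-order terms; equivalently, the proposition should be read as the leading-order \emph{necessary} condition on $d$, which is precisely how \textsc{FindOptimalChoices} uses it and which Section~\ref{sec:evaluation} validates numerically against the empirically optimal $d$.
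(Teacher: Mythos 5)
Your proof follows essentially the same route as the paper's: lower-bound the expected aggregate load on the set of workers serving a prefix of the head via completely colliding keys, upper-bound that same quantity via the imbalance constraint, and chain the two, with $b_h$ computed by the empty-slots argument. The two places where you are more careful than the paper --- invoking Jensen to get $\expect\left[(\lvert B\rvert/\numworkers)^d\right] \geq (b_h/\numworkers)^d$, and flagging that $\lvert B\rvert$ and $I(m)$ are correlated in the final substitution --- point at a step the paper itself glosses over (it treats the per-worker bound $L_w - 1/\numworkers \leq \epsilon$ as if it held deterministically before summing over the random set $\workers_i$ and replacing its size by its expectation), so your reading of the statement as a leading-order \emph{necessary} condition is exactly how the authors intend it to be used.
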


\begin{proof}
Let $L_w$ be the load of worker $w$.
By definition, $\avg_{w}(L_w)= \nicefrac{1}{\numworkers}$ since we consider normalized load that sums to $1$ over all workers.
Since $I(m) = \max_{w}(L_w) - \avg_{w}(L_w)$ by definition, and therefore $L_w - \nicefrac{1}{\numworkers} \leq I(m)$, we can rewrite Eqn.~\eqref{def:imbalance-dchoices-constraint} as:
\begin{align}
\label{eq:constraint-bound}
L_w - \frac{1}{\numworkers} \leq \epsilon, \quad \forall w \in \workers.
\end{align}

To continue our analysis, we would like to express the load on each worker as a function of the key distribution, however the load is a random variable that depends on the mix of keys that end up on the given worker, and the dynamic conditions of the other workers.
Nevertheless, while we cannot say anything about the load of a single worker, we can lower bound the (expected) load of a set of workers that handle a given key.

Let $\workers_i = \left\{ \hash{1}(k_i), \ldots, \hash{d}(k_i) \right\}$ be the set of workers that handle $k_i$.
Because workers are chosen via hashing, some of the $d$ choices might collide, and thus it might be that $\vert \workers_i \vert \leq d$.
Let $b = \lvert \workers_i \rvert$ be the expected size of $\workers_i$.

We now derive a lower bound on the cumulative (asymptotic) load on this set of workers.
This load consists of $p_i$ (by definition), plus the load generated by any key $k_j$ whose choices \emph{completely collide} with $k_i$, i.e., $\workers_j \subseteq \workers_i$.
Note that keys that partially collide with $k_i$ may also add up to the load of the workers in $\workers_i$, but we can ignore this additional load when deriving our lower bound.
The completely colliding keys $k_j$ may come from the head or from the tail. 
We take into account these two cases separately and express a lower bound on the expected cumulative load on $\workers_i$ as:
\begin{align}
\label{eq:load-bound}
\expect \left[ \sum_{w \in \workers_{i}} L_{w} \right] \geq 
p_i + 
\left( \frac{b}{\numworkers} \right)^{d} \sum_{\substack{j \leq \lvert \hh \rvert \\ j \neq i}} p_j + 
\left( \frac{ b }{\numworkers} \right)^{2} \sum_{j > \lvert \hh \rvert} p_j .
\end{align}
This equation expresses the frequency of $k_i$ ($p_i$) plus the (expected) fraction of the distribution that \emph{completely} collide with $k_i$.
A key in the head has $d$ independent choices, while a key in the tail has only $2$ independent choices.
Each choice has a probability of colliding with $\workers_i$ of $\nicefrac{b}{\numworkers}$ where $b$ is the size of $\workers_i$, as we assume ideal hash functions. 

We now use this lower bound to express a necessary condition for Eqn.~\eqref{eq:constraint-bound}, and thus for Eqn.~\eqref{eq:final-prefix-bound}.
By summing $L_w$ for all $w \in \workers_i$, Eqn.~\eqref{eq:constraint-bound} implies the following condition:
\begin{align}
\label{eq:sum-simple}
\expect \left[ \sum_{w \in \workers_{i}} L_{w} \right] \leq \frac{b}{n} + b \cdot \epsilon.
\end{align}

A {\em necessary} condition for this inequality to hold can be obtained by using the lower bound of Eqn.~\eqref{eq:load-bound}, which results in the following expression:
\begin{align}
p_i + 
\left( \frac{b}{\numworkers} \right)^{d} \sum_{\substack{j \leq \lvert \hh \rvert \\ j \neq i}} p_j + 
\left( \frac{ b }{\numworkers} \right)^{2} \sum_{j > \lvert \hh \rvert} p_j
\leq \frac{b}{\numworkers} + b \cdot \epsilon.
\end{align}

%
%

We can generalize the previous condition by requiring it to hold on the set of workers handling any prefix $p_1, \ldots, p_h$ of $h$ keys of the head. 
This results in a larger number of necessary conditions and thus in a potentially tighter bound, depending on the distribution of keys.
Similarly to Eqn.~\eqref{eq:sum-simple}, we obtain a set of constraints, one for each prefix of length $h$:

\begin{align}
\expect \left[ \sum_{w \in \bigcup_{i \leq h}\workers_{i}} L_{w} \right] \leq \frac{b_h}{\numworkers} + b_h \cdot \epsilon, \quad \forall k_h \in \hh,
\end{align}
where $b_h = \expect [ \lvert \bigcup_{h \leq i} \workers_{h} \rvert ]$ is the expected number of workers assigned to a prefix of the head of length $h$.
We can use once again the lower bound of Eqn.~\eqref{eq:load-bound} to obtain a necessary condition for the previous expression to hold as follows:
\begin{align}
\label{eq:constraint-prefix}
\sum_{i \leq h} p_i + 
\left( \frac{ b_h }{\numworkers} \right)^{d} \sum_{h < i \leq \lvert \hh \rvert} p_i + 
\left( \frac{ b_h }{\numworkers} \right)^{2} \sum_{i > \lvert \hh \rvert} p_i \leq &\\
\leq 
\expect \left[ \sum_{w \in \bigcup_{i \leq h}\workers_{i}} L_{w} \right] \leq \frac{ b_h }{\numworkers} + b_h \cdot \epsilon, & \quad \forall k_h \in \hh. \notag
\end{align}

We derive $b_h$ in Appendix~\ref{sec:collisions} as follows:
\begin{align}
b_h = \expect \left[ \left\vert \bigcup_{i \leq h} \workers_{i} \right\vert \right] = \numworkers - \numworkers \left( \frac{\numworkers - 1}{\numworkers} \right)^{h \times d}.
\end{align}


%
%

Finally, after substituting $b_h$ we obtain the final set of bounds of Eqn.~\eqref{eq:final-prefix-bound}, one for each prefix of the head \hh.
\end{proof}

Solving analytically for $d$ is complex, so we use an alternative approach to implement \textsc{FindOptimalChoices}.
We simply start from $d=p_1 \times n$, which is a simple lower bound on $d$ that derives from the fact that we need $p_1 \leq \nicefrac{d}{n}$, and increase $d$ until all constraints are satisfied.
In practice, the tight constraints are the ones for $h=1$, the most frequent key, and for $h=|\hh|$, when considering the whole head.
The latter, is especially tight for very skewed distributions, where the head represents a large fraction of the total load.
In this case, we need $b_h \approx \numworkers$ with high probability, but, as well know, this happens only when $h \times d \gg \numworkers \ln \numworkers$.
Given that $\max h$ is small, as the distribution is skewed and \hh has small cardinality, we need a large $d$.
However, having $d \geq \numworkers$ is not sensible, so when this condition is reached, the system can simply switch to the W-Choices algorithm.

\subsection{Memory Overhead}
\label{sec:memory}

We now turn our attention to the cost of the algorithms under study, as measured by the memory overhead required to achieve load balance.
Henceforth, we refer mainly to the memory overhead, however note that when splitting a key in $d$ separate partial states, if reconciliation is needed, there is also an aggregation cost proportional to $d$.
The two costs are proportional, so we focus our attention only on the first one.

\spara{Overhead on Sources.}
All three algorithms have the same cost on the sources.
The sources are responsible for forwarding messages to the workers.
Each source maintains a load vector storing local estimations of the load of every worker.
This estimation is based only on the load each source sends, and it is is typically a very accurate approximation of the actual global load of workers obtained considering all sources~\cite{nasir2015power}.
Storing the load vector requires $\mathcal{O}(\numworkers)$ memory.
Thus, the total memory required for estimating the load of workers at each source is $\numsources \times \numworkers$.
These costs are the same for all algorithms, even though Round-Robin uses the load information only for the tail.
Alongside the load estimation, each source runs an instance of the SpaceSaving algorithm to estimate the frequency of the head keys in the input streams.
The SpaceSaving algorithm requires $\mathcal{O}(1)$ memory and $\mathcal{O}(1)$ processing time per message.

\spara{Overhead on Workers.}
Without loss of generality, let us assume the memory required to maintain the state for each key is unitary.
Recall the set of choices available per key:\footnote{Ignoring that some keys in the tail have frequency smaller than \numworkers.}
\begin{squishlist} 
\item For $d = 2$, each key is assigned to at most two workers, as in \pkgs. The upper bound on the memory requirement to store \keysize unique items is $2 \times \keysize$.
\item For $2 < d < \numworkers$, each key is assigned to at most $d$ workers, as in D-Choices. The upper bound on the memory requirement is $d \times \lvert \hh \rvert + 2 \times \lvert \tail \rvert $.
\item For $d \gg \numworkers \ln \numworkers$, each key is assigned to all the \numworkers workers, as in W-Choices and Round-Robin. The upper bound on the memory requirement is $\numworkers \times \lvert \hh \rvert + 2 \times \lvert \tail \rvert$.
\end{squishlist}




From these formulas, it is clear that D-Choices always has a smaller overhead than the other algorithms, although it is larger than \pkgs.
But how much is the overhead in theory?
By using Inequality~\eqref{eq:final-prefix-bound}, we can find the appropriate value for $d$, in case the distribution is known apriori.
Figure~\ref{fig:bounds_d1_10000} shows the overhead of D-Choices for the required $d$, as a fraction of the number of workers, as a function of the skew.
We vary the exponent of a Zipf distribution $z \in \{0.1,\ldots, 2.0\}$, with $\keysize = 10^4$ unique keys, and $m = 10^7$ messages. 
The plot shows that D-Choices reduces the number of workers compared to W-Choices and Round-Robin.
Especially at larger scales, $\numworkers=50$ and $\numworkers=100$ workers, $d$ is always smaller than $n$.

\begin{figure}[t]
\begin{center}
\includegraphics[width=0.8\columnwidth, height=4.5cm]{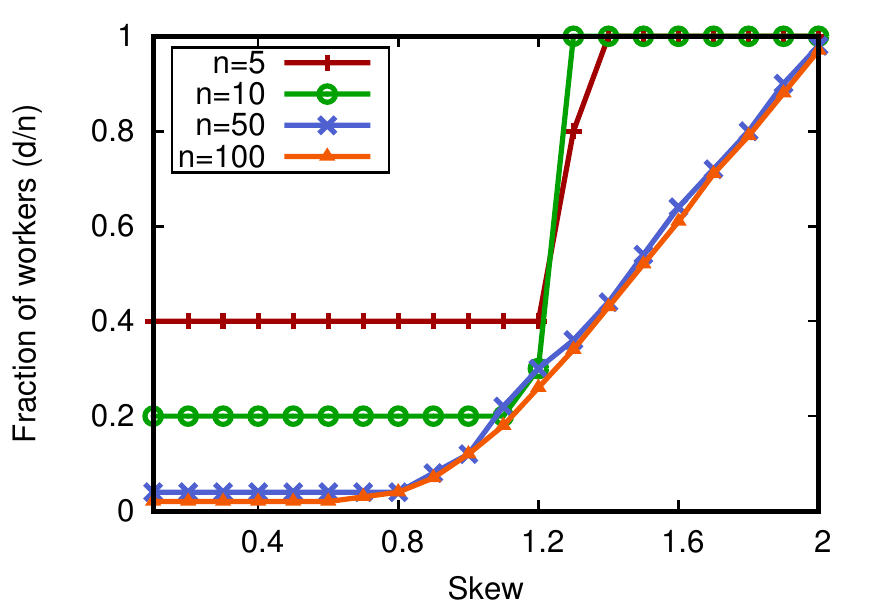}
\caption{Fraction of workers used by D-Choices for the head as a function of skew $z \in \{0.1,\ldots, 2.0\}$. Values for $\keysize=10^4$, $m = 10^7$, and $\epsilon = 10^{-4}$. }
\label{fig:bounds_d1_10000}
\end{center}
\vspace{-2em}
\end{figure}

%

Lastly, Figure~\ref{fig:zipf_memory_pkg} and Figure~\ref{fig:zipf_memory_sg}  show the estimated memory used by D-Choices and W-Choices with respect to \pkgs and \sg respectively.
All the algorithms (including \pkgs) need an aggregation phase, so the comparison is fair.
We estimate the memory for \pkgs and \sg using $\operatorname{mem}_{\pkgs} = \sum_{k \in \keyspace} \min(f_k, 2)$ and $\operatorname{mem}_{\sg} = \sum_{k \in \keyspace} \min(f_k,  \numworkers)$ respectively.

We derive three main observations from the results of Figure~\ref{fig:zipf_memory_pkg} and Figure ~\ref{fig:zipf_memory_sg}.
First, both D-Choices and W-Choices require an amount of memory similar to \pkgs when the number of workers is small.
However, this behavior changes when the number of workers increases.
For instance, for 100 workers W-Choices requires up to 25\% more memory compared to \pkgs.
Second, D-Choices requires less memory than W-Choices when the skew is moderately high (note that ratios lower than $10^{-2}$ are not plotted).
Nevertheless, both curves approach the same memory overhead at very high skew.

Note that the bump in Fig.~\ref{fig:zipf_memory_pkg} primarily depends on the size of the head (as shown in Fig~\ref{fig:head-size}) and how a Zipf distribution with finite support is defined.
Intuitively, as the skew grows larger, initially more keys pass the frequency threshold $\theta$, but later only a few keys become dominant (to the limit, only one key is in the head with $p_1 \rightarrow 1$).

Third, the overhead compared to \sg is negligible, especially at larger scales.
Both D-Choices and W-Choices require only a fraction of the cost of shuffle grouping, fulfilling one of the two desiderata for our solution.
The next section shows both proposed algorithms also achieve low imbalance, which is their main purpose.

\begin{figure}[t]
\begin{center}
\includegraphics[width=\columnwidth]{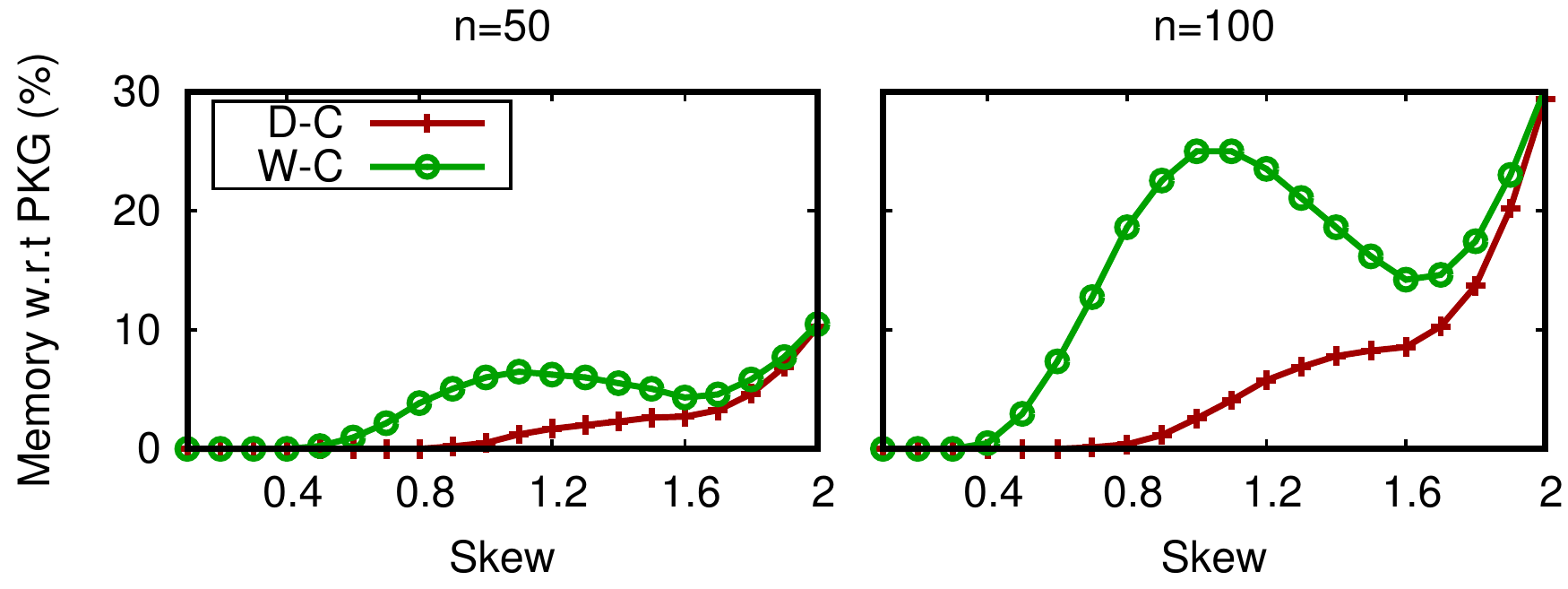}
\caption{Memory overhead for D-Choices (\dc) and W-Choices (\wc) with respect to \pkgs as a function of skew, for different number of workers $n \in \{50,100\}$. Values for $\keysize = 10^4$ and $\epsilon = 10^{-4}$. In the worst case, \dc and \wc use at most $30\%$ more memory than \pkgs, and \dc uses considerably less memory than \wc in most cases.}
\label{fig:zipf_memory_pkg}
\vspace{-1.5em}
\end{center}
\end{figure}
\begin{figure}[t]
\begin{center}
\includegraphics[width=\columnwidth]{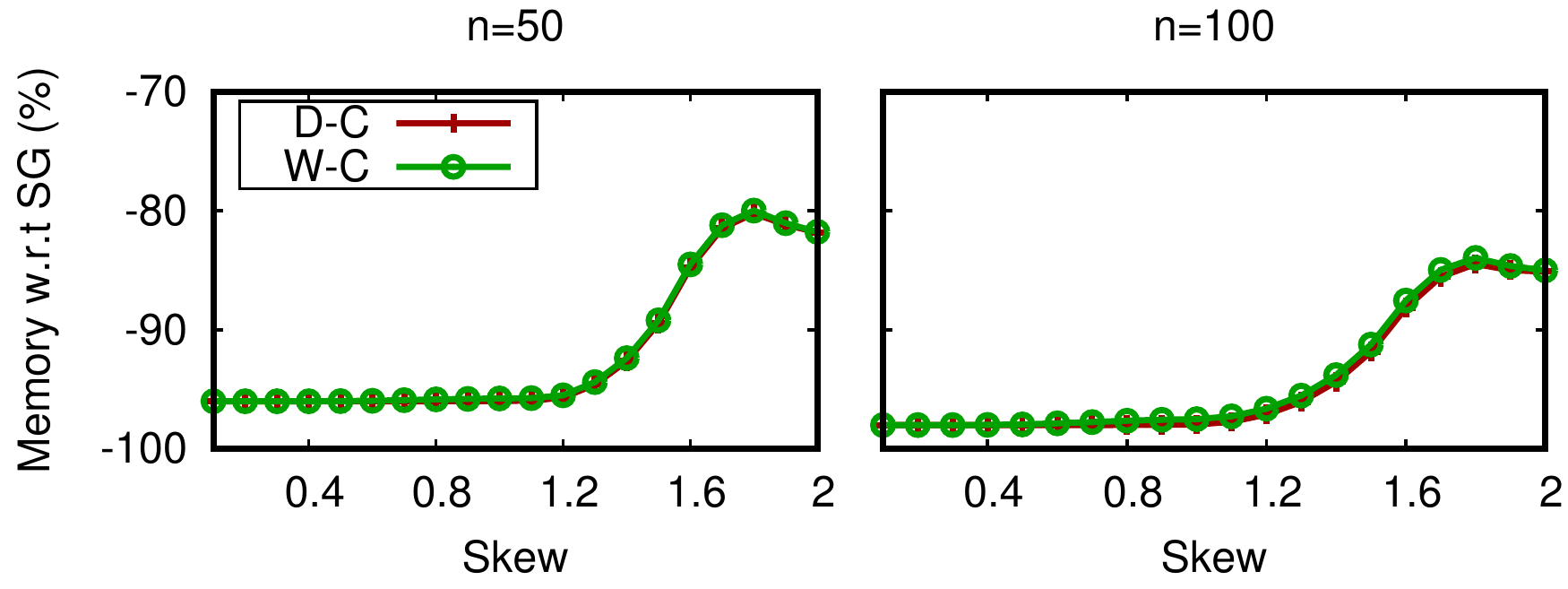}
\caption{Memory overhead for D-Choices (\dc) and W-Choices (\wc) with respect to \sg as a function of skew, for different number of workers $n \in \{50,100\}$. Values for $\keysize = 10^4$ and $\epsilon = 10^{-4}$. In the worst case, \dc and \wc still use $80\%$ less memory than \sg.}
\label{fig:zipf_memory_sg}
\vspace{-1.5em}
\end{center}
\end{figure}

\section{Evaluation}
\label{sec:evaluation}

In Section~\ref{sec:memory} we have derived an analysis of the memory overhead of our proposed techniques.
We now assess their load balancing performance using both simulations and a real deployment.
In so doing, we answer the following questions:

\begin{squishlist}
\setlength{\labelwidth}{1.5em}
\item[\textbf{Q1:}]
How to decide the threshold $\theta$ that defines the head of the distribution?
\item[\textbf{Q2:}]
How close is the estimated value of $d$ for \dc to an empirical optimum?
\item[\textbf{Q3:}]
How do the proposed algorithms compare in terms of load imbalance?
\item[\textbf{Q4:}]
What is the overall effect of proposed method on the throughput and latency of a real \dspe?
\end{squishlist}

\subsection{Experimental Setup}
\label{sec:datasets}

\begin{table}[t]
\caption{Summary of the datasets used in the experiments: number of messages, number of keys, and probability of the most frequent key ($p_1$).} 
\centering
\small
\begin{tabular}{l c r r r}
\toprule
Dataset		&	Symbol	&	Messages				&	Keys			& 	$p_1$(\%)		\\
\midrule
Wikipedia		&	WP		&	\num{22}M		&	\num{2,9}M		&	\num{9,32}	\\
Twitter		&	TW		&	\num{1,2}G		&	\num{31}M		&	\num{2,67}	\\
Cashtags		&	CT		&	\num{690}k	&	\num{2,9}k	&	\num{3,29}	\\
\midrule
Zipf			&	ZF		&    $10^7$   	&  $10^4$,$10^5$,$10^6$ &   $\propto$ $\frac{1} { \sum{x^{-z}} }$ \\
\bottomrule
\end{tabular}
\label{tab:summary-datasets}
\end{table}

\begin{table}[t]
\caption{Notation for the algorithms.}
\centering
\small
\begin{tabular}{r l l}
\toprule
Symbol	&	Algorithm		&	Head vs. Tail \\
\midrule
\dc	& D-Choices	&	\multirow{3}{*}{Specialized on head} \\
\wc & W-Choices 	&	\\
\rr & Round-Robin	&	\\
\midrule
\pkgs & Partial Key Grouping		&	\multirow{2}{*}{Treats all keys equally}\\
\sg   & Shuffle Grouping & 	\\
\bottomrule
\end{tabular}
\label{tab:summary-algorithm}
\end{table}

\begin{table}[t]
\caption{Default parameters for the algorithms.}
\centering
\small
\begin{tabular}{c l c }
\toprule
Parameter		&	Description	&	Values \\
\midrule
\numworkers	&	Number of workers	&	5, 10, 20, 50, 100 \\
\numsources	&	Number of sources	&	$5$ \\
$\epsilon$		&	Imbalance tolerance (D-Choices)	&	$10^{-4}$ 	\\
$\theta$		&	Threshold defining the head	&	$\frac{2}{n}, \ldots, \frac{1}{8n}$ \\
\bottomrule
\end{tabular}
\label{tab:algorithmic-parameters}
\end{table}

\spara{Datasets.}
Table~\ref{tab:summary-datasets} summarizes the datasets used.
We use three real-world datasets, extracted from Wikipedia and Twitter.
The Wikipedia dataset (WP)\footnote{\url{http://www.wikibench.eu/?page\_id=60}} is a log of the pages visited during a day in January 2008~\citep{urdaneta09wikipedia}.
Each visit is a message and the page's URL represents its key.
The Twitter dataset (TW) is a sample of tweets crawled during July 2012.
Each tweet is split into words, which are used as the key for the message.
Additionally, we use a Twitter dataset that is comprised of tweets crawled in November 2013.
The keys for the messages are the \emph{cashtags} in these tweets.
A \emph{cashtag} is a ticker symbol used in the stock market to identify a publicly traded company preceded by the dollar sign (e.g., \$AAPL for Apple).
This dataset allows to study the effect of drift in the skew of the key distribution.
Finally, we generate synthetic datasets (ZF) with keys drawn from Zipf distributions with exponent in the range $z \in \{ 0.1, \ldots, 2.0 \}$, and with a number of unique keys in the range $\keysize \in \{ 10^{4}, 10^{5}, 10^{6} \}$.

\spara{Simulation.}
We process the datasets by simulating the simplest possible \dagr, comprising one set of sources (\sources), one set of workers (\workers), and one intermediate partitioned stream from sources to workers.
The \dagr reads the input stream from multiple independent sources via shuffle grouping.
The sources forward the received messages to the workers downstream via the intermediate stream, on which we apply our proposed grouping schemes.
Our setting assumes that the sources perform data extraction and transformation, while the workers perform data aggregation, which is the most computationally expensive part of the \dagr, and the focus of the load balancing.

\spara{Algorithms.}
Table~\ref{tab:summary-algorithm} defines the notations used for different algorithm, while Table~\ref{tab:algorithmic-parameters} defines values of different parameters that we use for the experiments.
For most of the experiments we use load imbalance as defined in Section~\ref{sec:preliminaries} as a metric to evaluate the performance of the algorithms.
Unlike the algorithms in Table~\ref{tab:summary-algorithm}, other related load balancing algorithms~\citep{shah2003flux,cherniack2003scalable,xing2005dynamic,balkesen2013adaptive,castro2013integrating} require the \dspe to support operator migration.
Many \dspes, such as Apache Storm, do not support migration, so we omit these algorithms from the evaluation.

\subsection{Experimental Results}

\begin{figure*}[t]
\begin{center}
\includegraphics[scale = 0.55]{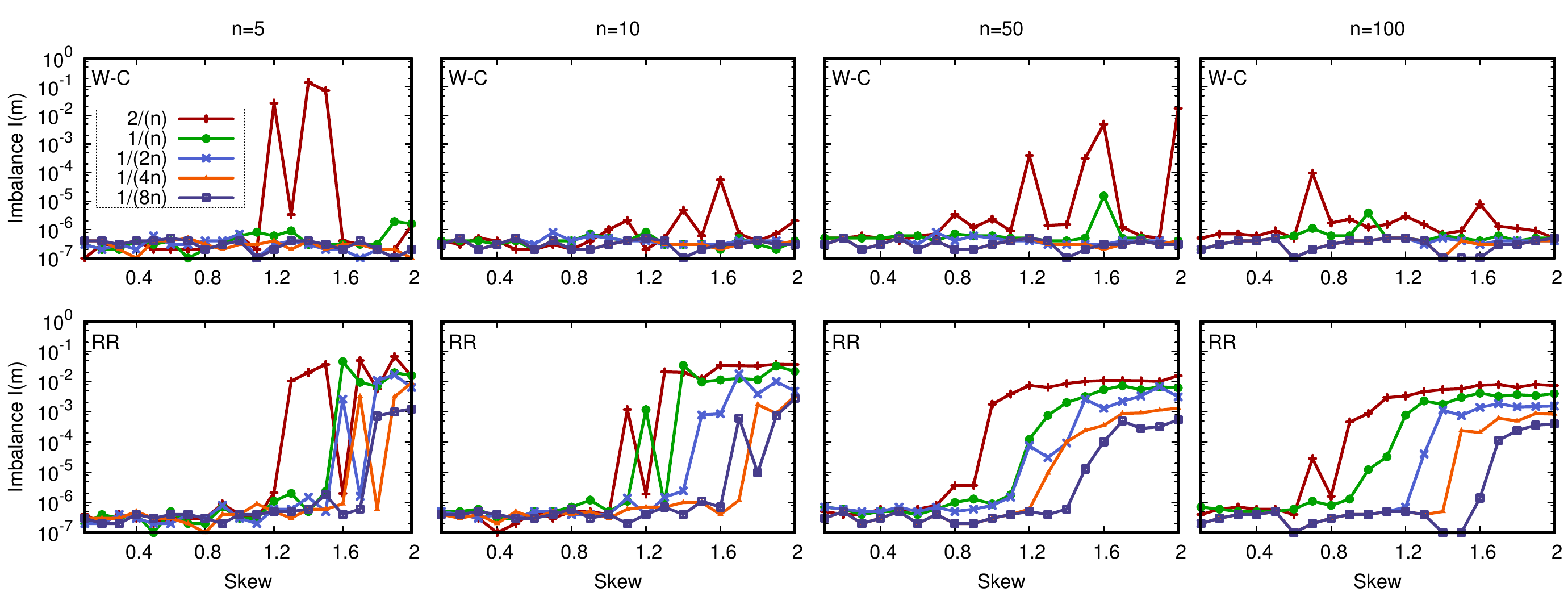}
\caption{Load imbalance as a function of skew $z$, for different thresholds $\theta$, for \wc and \rr. \wc provides a better load balance than \rr at high skew, especially for larger scale deployments, while having the same memory overhead. For this experiment, the stream has $\keysize = 10^4$ unique keys, and $m = 10^7$ messages.}
\label{fig:threshold-load-imbalance}
\end{center}
\vspace{-2em}
\end{figure*}
\begin{figure*}[t]
\begin{center}
\includegraphics[width=0.7\textwidth]{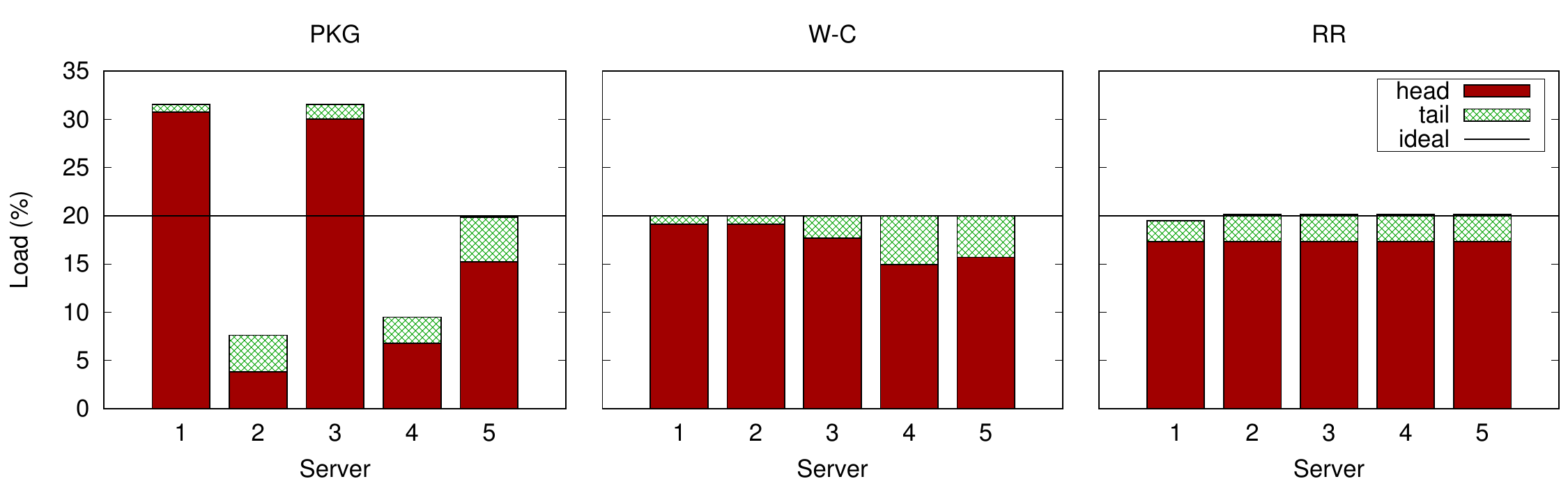}
\caption{Load generated by head and tail when using \pkgs, \wc , and \rr. The horizontal dotted line represents the ideal even distribution of the load (\nicefrac{1}{\numworkers}). For this experiment, the threshold is set to $\theta = \nicefrac{1}{8\numworkers}$, and the stream has $m=10^7$ messages with $\keysize = 10^4$ drawn from a Zipf distribution with exponent $z=2.0$.}
\label{fig:heavy_tail_2.0}
\end{center}
\vspace{-2em}
\end{figure*}
\begin{figure}[t]
\begin{center}
\includegraphics[width=0.9\columnwidth]{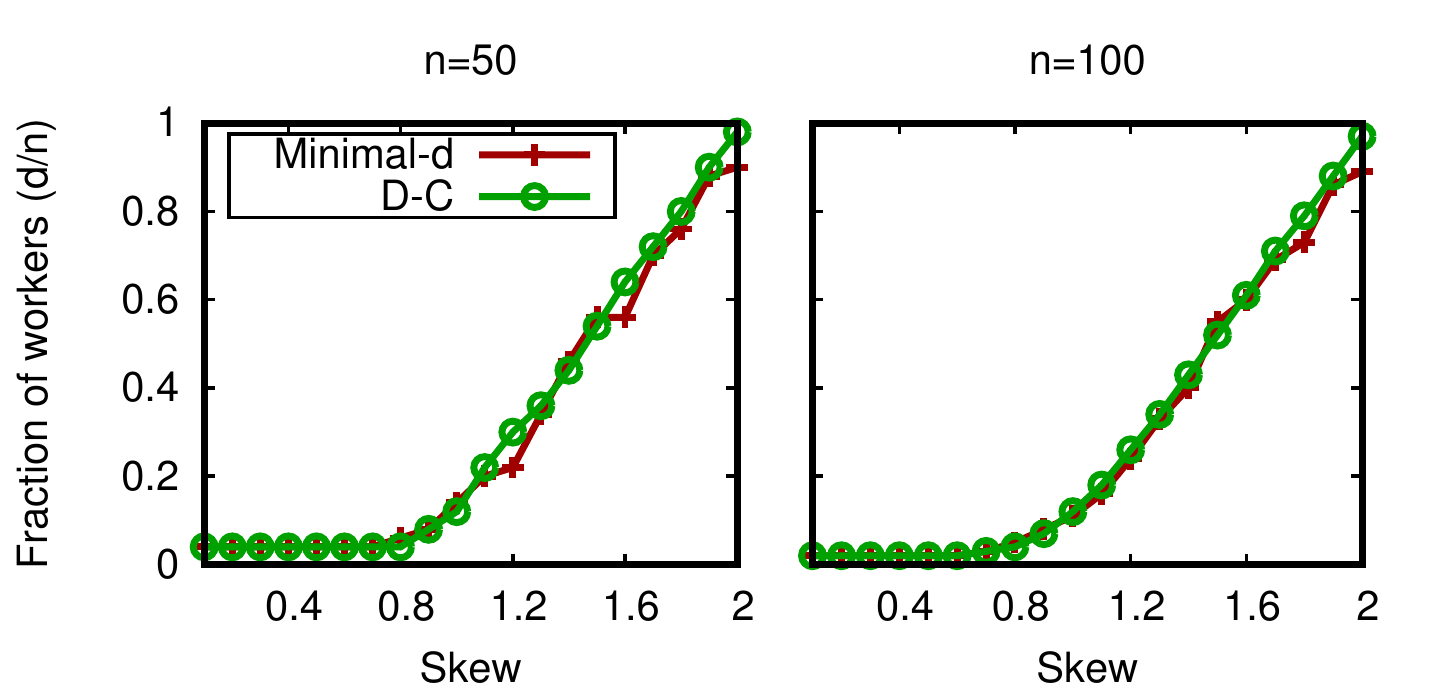}
\caption{Comparison of the $d$ value for \dc to the minimum value of $d$ required to match the imbalance of \wc (Min-d). The value computed by \dc is very close to the minimum for all settings.}
\label{fig:greedydvsdchoice}
\end{center}
\vspace{-2em}
\end{figure}
\begin{figure*}[t]
\begin{center}
\includegraphics[width=0.9\textwidth]{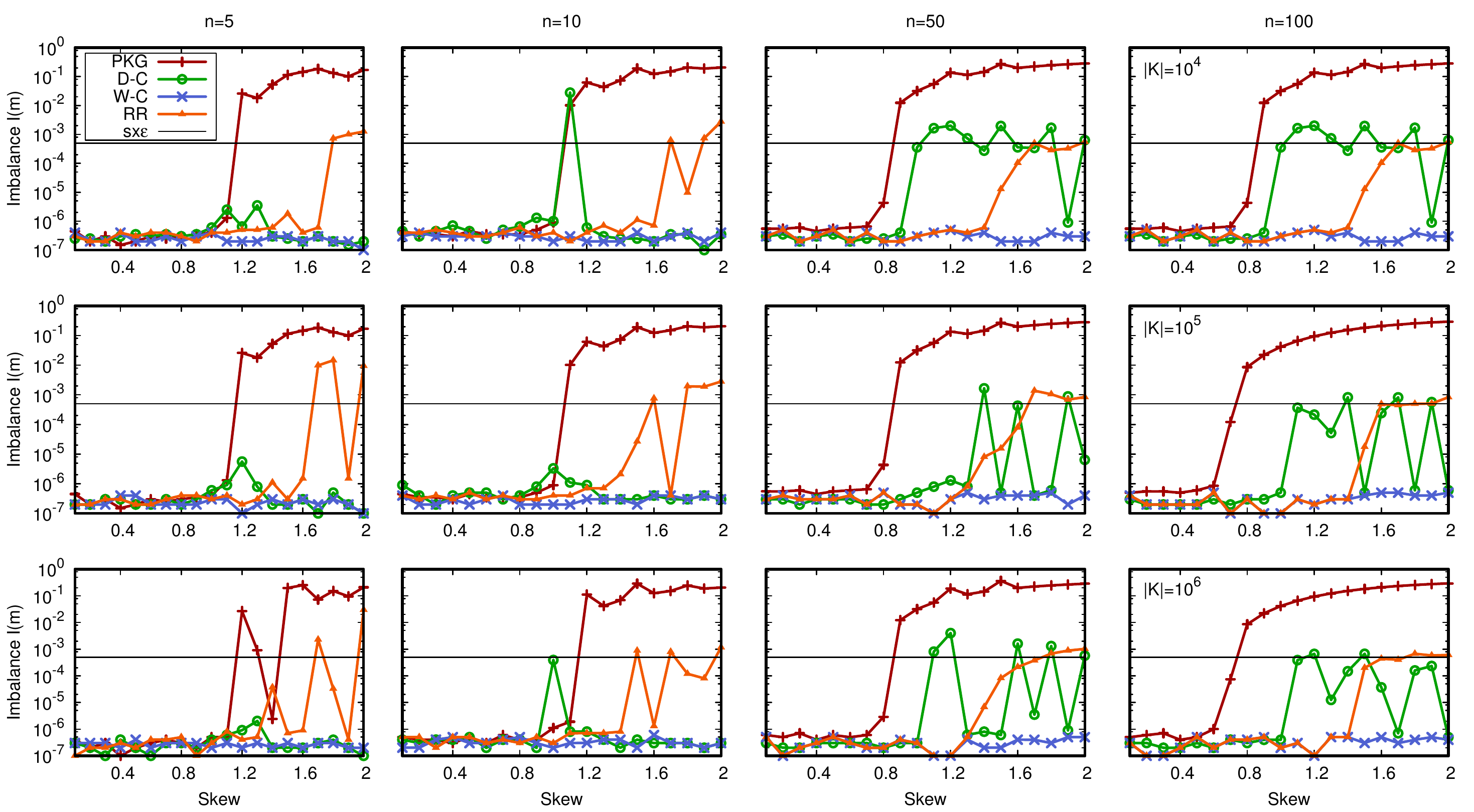}
\caption{Load imbalance on the ZF datasets for \pkgs, \dc, \wc, and \rr as a function of skew. The plots are shown for several combinations of settings of number of workers \numworkers and number of unique keys \keysize (number of messages $m = 10^7$).}
\label{fig:zipf_load_imbalance}
\end{center}
\vspace{-2em}
\end{figure*}
\spara{Q1.}
To find the right threshold $\theta$, we run an experiment with the W-Choices algorithm.
We use \wc because it is potentially the best one in terms of imbalance, as it is able to use the entire set of workers to distribute the load.
We compare it with Round-Robin, which has the same cost in terms of memory.
We vary the threshold by halving it, starting from $\nicefrac{2}{\numworkers}$ down to $\nicefrac{1}{8\numworkers}$.
For this experiment, we use the ZF dataset in order to verify the effect of skew.

Figure~\ref{fig:threshold-load-imbalance} shows the load imbalance as a function of skew, for different number of workers.
For both algorithms, reducing the threshold (i.e., considering a larger head) reduces the imbalance, and increasing the skew (i.e., larger $z$) increases the imbalance, as expected.
However, \wc achieves ideal load balance for any threshold $\theta \leq \nicefrac{1}{\numworkers}$, while \rr exhibits a larger gradient of results.
This behavior is more evident at larger scales, where \rr starts generating imbalance even under modest levels of skew.

The difference in results between the two algorithms is entirely due to the fact that \wc uses a load-sensitive algorithm to balance the head, which adapts by taking into consideration also the load of the tail.
Conversely, \rr balances the head and tail separately.
As a result, while the head is perfectly balanced, the tail has less flexibility in how to be balanced.

As additional evidence, Figure~\ref{fig:heavy_tail_2.0} shows the workload generated by head and tail separately for \pkgs, \wc, and \rr.
To ease visual inspection, we show the case for $\numworkers = 5$; the other cases follow a similar pattern.
This plot is obtained for $\theta = \nicefrac{1}{8\numworkers}$, i.e., the lowest threshold in our experiments.
As expected, the most frequent key in \pkgs overloads the two workers responsible for handling it, thus giving rise to high load imbalance.
Instead, \wc mixes the head and tail across all the workers, and achieves ideal load balance.
Finally, \rr splits the head evenly across the workers, but falls short of perfect balance, as represented by the horizontal dotted line.

In summary, even for low thresholds, \wc behaves better than \rr, and a threshold of $\nicefrac{1}{\numworkers}$ is enough for \wc to achieve low imbalance in any setting of skew and scale.
In the following experiments we set the threshold to its default value $\theta = \nicefrac{1}{5n}$.
This conservative setting gives the algorithms the largest amount of freedom in setting $d$, and guarantees that any imbalance is due to the choice of $d$ rather than the choice of $\theta$, which determines the cardinality of \hh.



\spara{Q2.}
In this experiment, we evaluate the parameter $d$ for \dc.
We compare how far the $d$ value as computed by the \dc algorithm is from an empirically-determined optimal value.
To find the optimum, we execute the Greedy-$d$ algorithm with all the possible values of $d$, i.e., $2, \ldots, n$ and find the minimum $d$ that is required to match the imbalance achieved by \wc.

Figure~\ref{fig:greedydvsdchoice} shows this optimum in comparison to the value of $d$ for \dc as computed by the algorithm described in Section~\ref{sec:analysis}.
The results clearly show that \dc is very close to the optimal $d$, thus supporting our analysis.
In all cases, \dc uses a $d$ slightly larger than the minimum, thus providing good load balance at low cost.

\begin{figure*}[t]
\begin{center}
\includegraphics[scale=0.65]{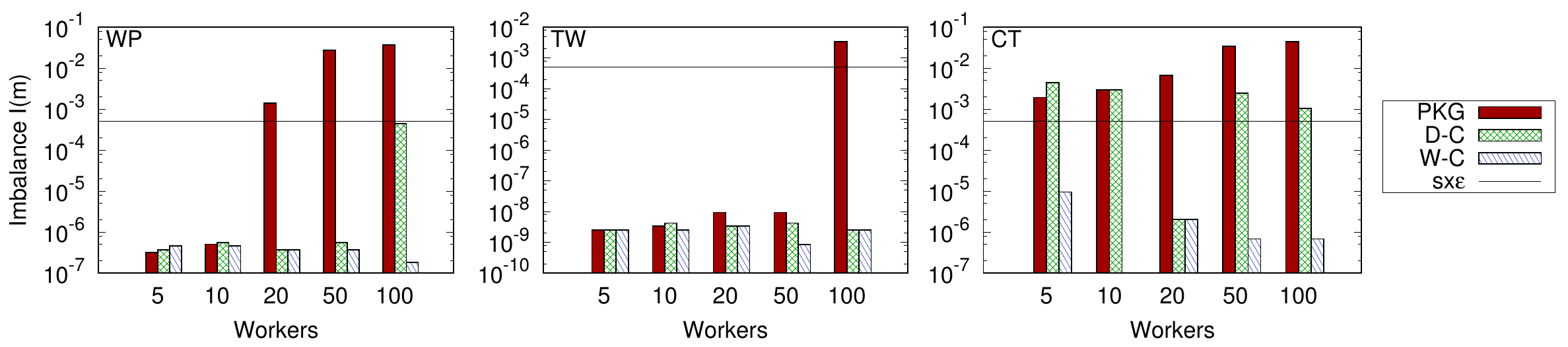}
\caption{Imbalance on the real-world datasets WP, TW and CT for \pkgs , \dc, and \wc as a function of the number of workers. Larger scale deployments are inherently harder to balance.}
\label{fig:load-imbalance-datasets}
\end{center}
\vspace{-2em}
\end{figure*}

\begin{figure*}[t]
\begin{center}
\includegraphics[width=\textwidth]{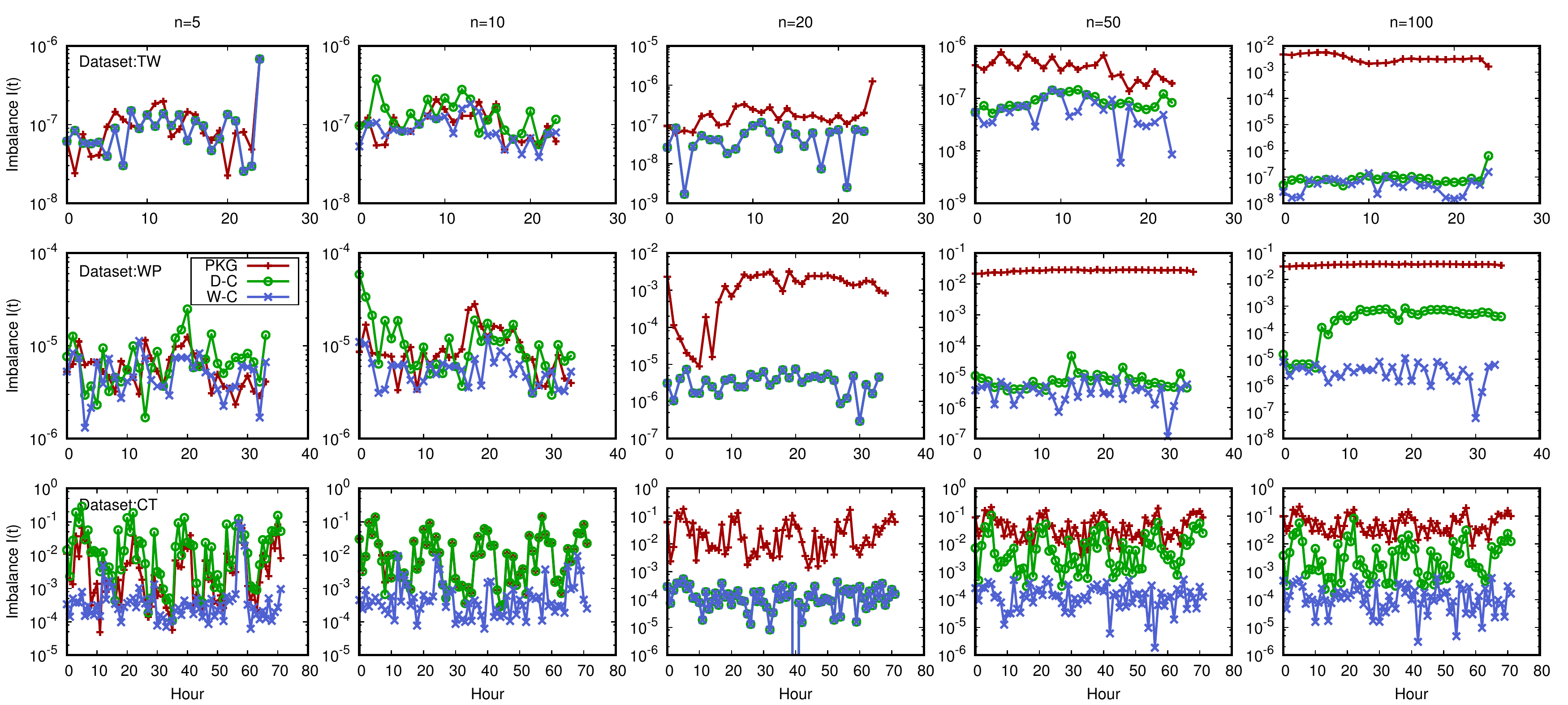}
\caption{Load imbalance over time for the real-world datasets TW, WP, and CT.}
\label{fig: time_series_twitter_wiki_tickers}
\end{center}
\vspace{-2em}
\end{figure*}

\spara{Q3.}
We now examine the performance of the proposed algorithms in terms of load imbalance.
We use the ZF datasets to control for the skew, and additionally, we evaluate the proposed algorithms on several real-world datasets, i.e., WP, TW and CT. 
We compare \dc and \wc to \rr and \pkgs as baselines.

Figure~\ref{fig:zipf_load_imbalance} shows the average imbalance as a function of skew for several possible settings of number of workers and number of keys.
The number of keys does not influence the imbalance, thus suggesting that $\keysize = 10^4$ is a large enough number to give the algorithms enough freedom in their placement, for these particular distributions.
On the other hand, skew and scale clearly play an important role.
The problem becomes harder as both $z$ and $n$ increase.
\wc is the best performer in terms of imbalance, as it manages to keep it constantly low irrespective of the setting.
\dc and \rr perform similarly, and both manage to keep the imbalance much lower than \pkgs for larger values of skew and scale.
However, recall that \dc does so at a fraction of the cost of \rr.

Figure~\ref{fig:load-imbalance-datasets} shows similar results for the real-world datasets (we omit \rr as it is superseded by \wc).
All algorithms perform equally well when the number of workers is small.
Instead, at larger scales ($\numworkers=20, 50, 100$) \pkgs generates higher load imbalance compared to \dc and \wc.
As expected, \wc provides better load balance compared to \dc .
However, recall that for \dc the level of imbalance is also determined by the tolerance $\epsilon$.
As each source executes the algorithm independently, in the worst case we should expect an imbalance of $s \times \epsilon$, plotted as a horizontal dotted line.

The Twitter cashtags dataset (CT) behaves differently from TW and WP.
This dataset is characterized by high concept drift, that is, the distribution of keys changes drastically throughout time.
This continuous change poses additional challenges to our method, especially for the heavy hitters algorithm that tracks the head of the distribution.
Overall, the dataset is harder to balance for all methods, however the relative performance of the various algorithms is still as expected.
\dc works better than \pkgs at larger scales, and \wc provides even better load balance.

Finally, Figure~\ref{fig: time_series_twitter_wiki_tickers} allows to observe the behavior of the proposed algorithms over time.
This experiment enables us to study the effect of drift in the key distribution.
We compare the imbalance over time for \pkgs, \dc, and \wc on the real-world datasets.
The plot shows the imbalance for several settings, and confirms our previous observations.
Larger scales are challenging for \pkgs, but are handled well by \dc and \wc, and the concept drift in the CT dataset makes the task harder.
Nevertheless, overall the imbalance remains relatively stable throughout the simulation.

%

\begin{figure}[h!]
\begin{center}
\includegraphics[width=0.9\columnwidth, height=3.7cm]{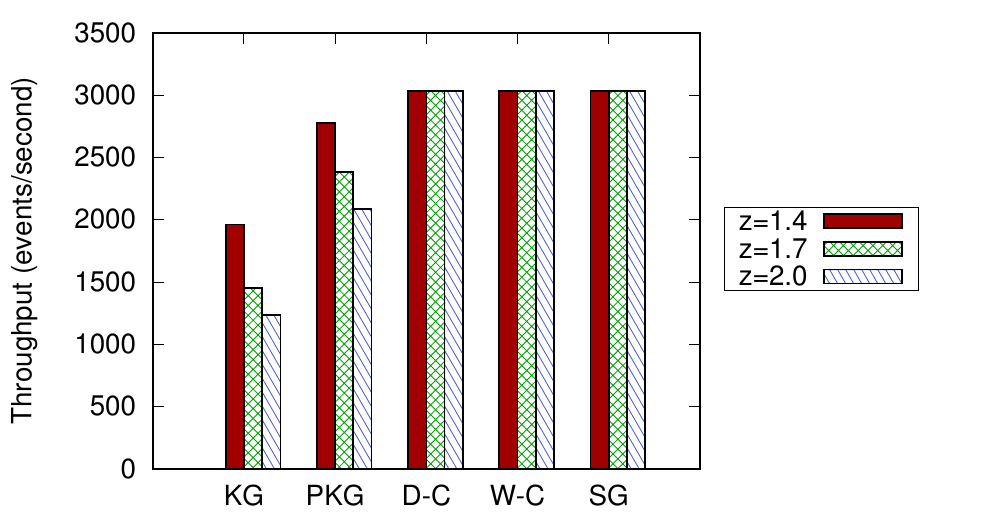}
\caption{Throughput on a cluster deployment on Apache Storm for \kg, \pkgs , \sg, \dc, and \wc on the ZF dataset, for sample values of $z$, with $\numworkers = 80$, \keysize =$10^4$, and \nummsgs = $2 \times 10^6$.}
\label{fig:throughput}
\end{center}
\vspace{-2em}
\end{figure}
\begin{figure*}[t]
\begin{center}
\includegraphics[width=0.9\textwidth]{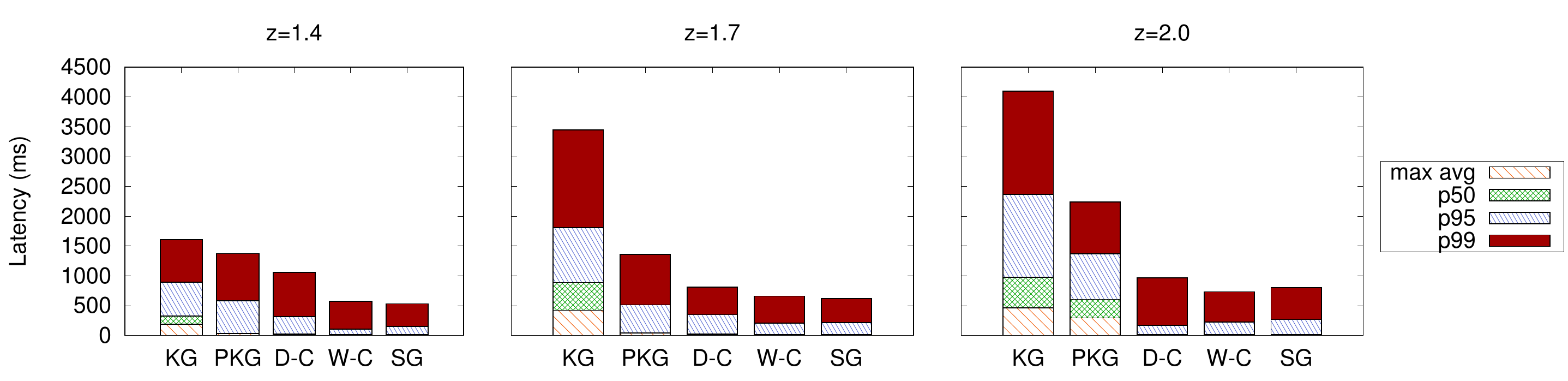}
\caption{Latency (average and percentiles) on a cluster deployment on Apache Storm for \kg, \pkgs , \sg, \dc, and \wc on the ZF datasets, for sample values of $z$, with $\numworkers = 80$, $\keysize =10^4$, and $\nummsgs = 2 \times 10^6$.}
\label{fig:latency-2}
\end{center}
\vspace{-2em}
\end{figure*}

\enlargethispage{\baselineskip}
\spara{Q4.}
To quantify the impact of our proposed techniques, \dc and \wc, on a real \dspe, we implement them in Apache Storm and deploy them on a cluster.
We compare them with the standard groupings: \kg, \sg, and \pkgs.
The Storm cluster consists of \num{9} machines (1 master and 8 slaves), where each machine has $16$ VCPUs.
Each slave machine has \num{16} execution slots (for a total of $8 \times 16 = 128$ slots in the cluster).

We use a simple streaming application that consist of two operators: sources that generate the input stream, and workers that perform aggregation on the stream.
Concretely, we use $48$ sources and $80$ workers.
The sources generate the stream from a Zipf distribution with skew $z \in \{1.4, 1.7, 2.0\}$, \keysize = $10^4$, and \nummsgs = $2 \times 10^6$.
This aggregation-like application is representative of many data mining algorithms, such as computing statistics for classification, or extracting frequent patterns.
To emulate some CPU consumption, we add a fixed delay of $1$ms to the processing of each message (this delay represents $\nicefrac{1}{10}$-th of a disk seek\footnote{\url{http://brenocon.com/dean_perf.html}}).
We choose this value for the delay to bring the application to the saturation point of the cluster.
We report overall throughput and end-to-end latency by taking the average over multiple iterations.

\enlargethispage{\baselineskip}
Figure~\ref{fig:throughput} shows the throughput for the experiment.
As expected, \kg achieves the lowest throughput among all the groupings.
\pkgs performs better than \kg, however it is not able to match \sg, and pays a price for the imbalance.
Instead, our proposed techniques (\dc and \wc) are able to match the throughput of \sg (at a fraction of its cost).
As expected, increasing the skew makes the problem harder for both \kg and \pkgs.
In the best case, \dc and \wc achieve a throughput $1.5$ times higher than \pkgs, and $2.3$ times higher than \kg.

Finally, Figure~\ref{fig:latency-2} shows the difference among the algorithms in terms of latency.
The plot reports the maximum of the per-worker average latencies, along with the $50$th, $95$th, and $99$th percentiles across all workers.
This message latency primarily depends on the time each message waits in the input queue of the worker before being processed, since we keep computation time per message constant to $1$ms.

While with modest skew the difference is small, for larger skews \kg suffers from very high latency.
Given that the worker handling the most frequent key needs to process a large fraction of the stream, longer queues form at this worker, and this result is to be expected.
Clearly \pkgs fares better, introducing about $50\%$ of the latency of \kg at the highest skew.
However, for all settings it is not able to match the latency provided by \sg, which can be up to $3$ times smaller.
Conversely, both \dc and \wc are able to provide latencies very close to the ones obtained by \sg.
In the best case, \dc and \wc reduce the $99$th percentile latency of \pkgs by $60\%$, and the latency of \kg by over $75\%$.

Overall, both D-Choices and W-Choices are able to achieve the same throughput and latency of shuffle grouping at just a fraction of the cost in terms of memory.
These results suggest that both algorithms could easily replace the default implementation of shuffle grouping in most \dspes, and provide a form of ``worker affinity'' for states related to a given key without compromising on the load balance.


\section{Related Work}
\label{sec:rel-work}
Recently, there has been considerable interest in the load balancing problem for \dspes when faced with a skewed input stream~\cite{gedik2014partitioning, rivetti2015efficient}.
Similarly to our work, previous solutions employ heavy hitter algorithms, and provide load balancing for skewed inputs when partitioned on keys. 
\citet{gedik2014partitioning} develops a partitioning function (a hybrid between explicit mapping and consistent hashing) for stateful data parallelism in \dspes that leverages item frequencies to control migration cost and imbalance in the system.
The author proposes heuristics to discover new placement of keys on addition of parallel channels in the cluster, while minimizing the load imbalance and the migration cost. 
\citet{rivetti2015efficient} propose an algorithm for a similar problem with an offline training phase to learn the best explicit mapping.
Comparatively, we propose a light-weight streaming algorithm that provide strong load balancing guarantees, and which does not require training or active monitoring of the load imbalance.
We show that by splitting the head of the distribution on multiple workers, and the tail distribution on at-most two workers, the techniques in this paper achieve nearly ideal load balance.

The load balancing problem has been extensively studied in the theoretical research community~\cite{azar1999balanced-allocations,talwar2007weightedcase,wieder2007heterogeneousbins}.
There have also been several proposals to solve the load balancing problem for peer-to-peer systems~\citep{rao2003load,karger2004simple}.
The E-Store system explores the idea of handling hot tuples separately from cold tuples in the context of an elastic database management system~\citep{taft2014store}.
Most existing load balancing techniques for \dspes are analogous to key grouping with rebalancing~\citep{shah2003flux,cherniack2003scalable,xing2005dynamic,balkesen2013adaptive,castro2013integrating}.
Flux monitors the load of each operator, ranks servers by load, and migrates operators from the most loaded to the least loaded servers~\citep{shah2003flux}.
Aurora* and Medusa propose policies to operator migration in \dspes and federated \dspes~\citep{cherniack2003scalable}.
Borealis uses a similar approach, but also aims at reducing the correlation of load spikes among operators placed on the same server~\citep{xing2005dynamic}.
This correlation is estimated by using load samples taken in the recent past.
Similarly, \citet{ balkesen2013adaptive} propose a frequency-aware hash-based partitioning to achieve load balance.
\citet{castro2013integrating} propose integrating common operator state management techniques for both checkpointing and migration.
Differently from these techniques, the ones proposed in this work do not require any rebalancing, operator migration, or explicit mapping of the keys to servers (which necessitates potentially huge routing tables).

\section{Conclusion}
We studied the problem of load balancing for distributed stream processing engines when deployed at large scale.
In doing so, we analyzed how highly skewed data streams exacerbate the problem, and how this affects the throughput and latency of an application.
We showed that existing state-of-the-art techniques fall short in these extreme conditions.

We proposed two novel techniques for this tough problem: D-Choices and W-Choices.
These techniques employ a streaming algorithm to detect heavy hitter for tracking the hot keys in the stream, which constitute the head of the distribution of keys, and allows those hot keys to be processed on larger set of workers.
In particular, W-Choices allows a head key to be processed on entire set of workers, while D-Choices places a key on a smaller number of workers.
This number is determined by using an easy-to-compute lower bound derived analytically and verified empirically.

We evaluated our proposal via extensive experimentation that covers simulations with synthetic and real datasets, and deployment on a cluster running Apache Storm.
Results show that the techniques in this paper achieve better load balance compared to the state-of-the-art approaches.
This improvement in balance translates into a gain in throughput of up to $150\%$, and latency of up to $60\%$ over \pkgs for the cluster setup.
When compared to key grouping, the gains are even higher: $230\%$ for throughput and $75\%$ for latency.
Overall, D-Choices and W-Choices achieve the same throughput and latency as shuffle grouping, at just a fraction of the cost in terms of memory.


\bibliographystyle{IEEEtran}
\balance
\bibliography{references}
\begin{appendices}
\section{Expected Size of a Worker Set}
\label{sec:collisions}
The process of chosing the workers for a key, is akin to placing $d$ items, uniformly at random and with replacement, into \numworkers possible slots, with $d < \numworkers$ (in our case, if $d = \numworkers$ we should switch to the W-Choices algorithm).
We now derive the expected number of slots $b$ filled with at least one item.

As usual, it is easier to answer the complementary question: how many slots are left empty?
Each slot has probability $p_{\text{empty}} = \left( 1 - \frac{1}{\numworkers} \right) ^ d$ to be left empty at the end of $d$ independent placements.
Therefore, the expected number of empty slots is
\[
\expect[X_{\text{empty}}] = \sum_{i = 1}^{\numworkers} p_{\text{empty}} = \numworkers \left( \frac{\numworkers - 1}{\numworkers} \right) ^ d .
\]
The number of full slots can be computed by subtracting the number of empty slots from the total number of slots,
\[
b = \expect[X_{\text{full}}] = \numworkers - \expect[X_{\text{empty}}] = \numworkers - \numworkers \left( \frac{\numworkers - 1}{\numworkers} \right) ^ d .
\]
The expression of $b_h$ can be obtained by replacing $d$ with $h \times d$ in the previous expression, since we want to place $h \times d$ items.


\end{appendices}
\end{document}